\pgfplotsset{
	every tick label/.append style={scale=1},
	every axis/.append style={
	}
}
\pgfplotsset{
	grid style = {
		dash pattern = on 0.05mm off 1mm,
		line cap = round,
		black,
		line width = 0.5pt
	}
}
\newcommand{\red}{\Re} 
\newcommand{\ima}{\Im}                                                
\newcommand{\df}{\Delta f}
\newcommand{\leff}{L_{\mathrm{eff}}}
\newcommand{\ns}{n_{\mathrm{span}}}
\newcommand{\hst}[2]{\tilde{h}_{#1}(#2)}
\newcommand{\hs}[2]{h_{#1}(#2)}
\newcommand{\Aa}[3]{\mathbf{a}_{#1}(#2,#3)}
\newcommand{\Ab}[2]{\mathbf{a}(#1,#2)}
\newcommand{\Aad}[1]{\mathbf{a}_{#1}}
\newcommand{\Abd}{\mathbf{a}}
\newcommand{\Aar}[3]{\red\lefto(\mathbf{a}_{#1}(#2,#3)\right)}
\newcommand{\Aai}[3]{\ima\lefto(\mathbf{a}_{#1}(#2,#3)\right)}
\newcommand{\Nn}[2]{\mathbf{n}_{#1}(#2)}
\newcommand{\Nnr}[2]{\red\rlpo{\mathbf{n}_{#1}(#2)}}
\newcommand{\Nni}[2]{\ima\rlpo{\mathbf{n}_{#1}(#2)}}
\newcommand{\phii}[2]{\phi(#1 , #2)}
\newcommand{\si}{\mathbf s}
\newcommand{\din}[2]{\mathbf x_{#1#2}}
\newcommand{\dii}[1]{\mathbf x_{#1}}
\newcommand{\diip}[1]{\mathbf x'_{#1}}
\newcommand{\diin}[1]{ x_{#1}}
\newcommand{\vU}{\ul{\mathbf{u}}}
\newcommand{\vUs}{\ul{{u}}}
\newcommand{\FFIIc}[1]{\tilde{\Phi}(#1)}
\newcommand{\FFIIs}[1]{\Phi(#1)}
\newcommand{\sch}{Schr\"odinger}
\newcommand{\omfs}{{\mathbf v}}
\newcommand{\omfss}{{ v}}
\newcommand{\omtd}{{\mathbf w}}
\newcommand{\mf}{\mathbf}                                                
\newcommand{\chek}[1]{\quad\href{run:./references/#1}{\checkmark}}
\newcommand{\expe}{\mathrm E}
\newcommand{\bo}{\mathcal{O}}
\newcommand{\ul}[1]{\underline{#1}}                  
\newcommand{\be}{\begin{equation}}
	\newcommand{\ee}{\end{equation}}
\newcommand{\bs}{\begin{split}}
	\newcommand{\es}{\end{split}}
\newtheorem{theorem}{Theorem}
\newtheorem{proposition}{Proposition}
\newcommand{\bal}{\begin{align}}
	\newcommand{\eal}{\end{align}}
\newcommand{\rlp}[1]{\left(#1\right)}
\newcommand{\rlb}[1]{\left[#1\right]}
\newcommand{\rlpo}[1]{\lefto(#1\right)}
\newcommand{\rlbo}[1]{\lefto[#1\right]}
\newcommand{\lefto}{\mathopen{}\left}
\newcommand{\distas}{\sim}					
\newcommand{\di}{\mathop{}\!\mathrm{d}}
\begin{document}

\title{{  Demodulation and Detection Schemes for a  Memoryless  Optical WDM Channel}
}

\author{
	Kamran~Keykhosravi,~\IEEEmembership{Student~Member,~IEEE,}
	Morteza~Tavana,~\IEEEmembership{Student~Member,~IEEE,}\\
	Erik~Agrell,~\IEEEmembership{Fellow,~IEEE,}\\
	and~Giuseppe~Durisi,~\IEEEmembership{Senior~Member,~IEEE.}
	
	\thanks{Parts of this paper  have been  presented at the IEEE International Symposium on Information Theory (ISIT), June~2017, Aachen, Germany \cite{Kamran17_ISIT}.  }
\thanks{ This work 
	was supported by the Swedish Research Council (VR) under Grant 2013-5271 and by the European commission under Grant MSCA-ITN-EID-676448. }
\thanks{The authors are with the Department of Electrical Engineering,
	Chalmers University of Technology, Gothenburg 41296, Sweden (e-mail:
	kamrank@chalmers.se; tavanam@chalmers.se; agrell@chalmers.se; durisi@chalmers.se).}	
}

\maketitle

\begin{abstract}
It is well known that  matched filtering and sampling (MFS) demodulation together with   minimum Euclidean distance (MD) detection constitute the optimal receiver for the additive white Gaussian noise channel. However,  for a general nonlinear transmission medium, MFS does not provide sufficient statistics, and therefore is suboptimal. Nonetheless, this receiver is widely used in optical systems, where the Kerr nonlinearity is the dominant impairment at  high powers. 
In this paper, we consider a suite of receivers  for a  two-user channel subject to a type of nonlinear interference that occurs  in wavelength-division-multiplexed  channels.
{  The asymptotes of the symbol error rate (SER)  of the considered receivers at high powers  are derived or bounded analytically. Moreover, }
Monte-Carlo simulations are conducted to evaluate the SER  for all the  receivers. 
Our results show that  receivers that are based on MFS cannot achieve arbitrary low SERs, whereas the SER goes to zero as the power grows for the optimal receiver. Furthermore, we devise a  heuristic demodulator, which together with the MD detector yields a receiver that is simpler than the optimal one and can achieve arbitrary low SERs. { The SER performance of the proposed receivers is also evaluated for some  single-span fiber-optical channels via split-step Fourier simulations.

 }

\end{abstract}

\begin{IEEEkeywords}
	Optical fiber, nonlinearity compensation, nonlinear channel,  demodulation, MAP detector.
\end{IEEEkeywords}

\IEEEpeerreviewmaketitle

\section{Introduction}\label{s1}

The development of the standard single-mode fiber {(SMF)} in the $1970$s \cite{miya1979ultimate}  and of the erbium-doped fiber amplifiers \cite{mears1987low}  in the late $80$s increased the capacity of the fiber-optical channel  far beyond the required data rate in those days.
This abundance of resources made it inessential to exploit the bandwidth optimally in the design of  optical communication networks. 
Nowadays, however,  with the exponential growth of the global Internet, the data demand has started meeting the limits of  traditional optical systems. 
This ever-increasing data demand has motivated many  recent efforts, including the one in the current paper, to increase the efficiency of  optical transmitters and receivers.

For the additive white Gaussian noise (AWGN) channel, it is well known that the matched filtering and sampling (MFS) demodulator provides  sufficient statistics for detecting the transmitted symbol from the received continuous-time signal.
Although, in general, this method is suboptimal for nonlinear channels, it has been deployed broadly in optical fiber transmission systems,  where the  Kerr nonlinearity critically limits the achievable information rate at moderate and high powers \cite{essiambre_2010_jlt}.

In  advanced optical communication systems, a single-mode fiber  hosts approximately one hundred wavelength-division-multiplexed (WDM) channels.
 In such systems, the  Kerr nonlinearity gives rise to: \textit{i)} self-phase modulation (SPM), where the signal phase is distorted depending on its own magnitude; \textit{ii)}  cross-phase modulation (XPM), where the magnitude of the signal transmitted over  neighboring channels modulates the phase of the signal of interest; and \textit{iii)} four-wave mixing (FWM), where three signals at different frequencies create a distortion at a new frequency.
 In this paper, we shall  focus  on the first two effects and assume that the impact of FWM (the third effect) is mitigated by  appropriate channel spacing (see, for example, \cite{forghieri1994reduction}).

Many methods have been proposed, both in the optical and the electrical domains, to compensate for the fiber nonlinear distortion \cite[Ch.~2]{naga_2016_phd}.
Soliton-based communication \cite{lax1968integrals} is among the primary solutions to mitigate the channel impairments including the nonlinearity. It is based on soliton pulses, which can propagate through the fiber undisturbed. In  recent years, this method has received attention in the context of the nonlinear Fourier transform \cite{yousefi_2014_tinf3}. Inverting the signal's phase at the middle of the transmission line is another effective approach to reduce the nonlinear distortion \cite{fisher1983optical}.

In the last decade, the advancement of digital signal processors (DSP) made them a key enabling technology for data transmission over the fiber-optical channel. A number of known nonlinearity mitigation techniques are based on DSPs, three of which are  reviewed next. \textit{i)} \textit{Digital back propagation} \cite{ip_2008_jlt}  is a well-known method to compensate for the fiber impairments. Using this technique, all the signal--signal distortions can be compensated for by processing the signal at the transmitter, at the receiver, or at both ends. However, digital back propagation suffers from high computational complexity, and it requires knowledge of all  copropagating channels. \textit{ii)} The effects of XPM can be partially mitigated via \textit{adaptive equalization} that utilizes the time coherency of the XPM distortions {(see for example  \cite{secondini2014xpm,arlunno2011digital})}. 
\textit{iii)} Using an approximate probability distribution for the  {channel law}, one can devise  \textit{nonlinearity-tailored detection} techniques to improve the symbol error rate (SER) \cite{marsella2014maximum,cai2010experimental}.

The optical channel model can be described by the nonlinear \sch{}  (NLS) equation \cite[Eq.~2.6.18]{agrawal_2002_focs}. Since  the input--output relation is given implicitly through a differential equation, developing the optimal transmitter and receiver for the NLS channel  seems a formidable task. 
{ 
By neglecting the channel memory, closed-form input--output relations can be obtained. } The analyses based on these models are applicable to optical systems with { short-haul} zero-dispersion fibers (see, for example {\cite{tan_2011_icc,lau_2007_jlt,keykhosravi-ecoc-17,yousefi_2011_tinf,beygi_2011_jlt}}). 
Furthermore, many simplified  models have been developed in the literature to approximate the NLS channel (see \cite{Erik_IT_friendly} and the references therein). 
Applying perturbation theory, { or equivalently Volterra series,} and ignoring signal--noise interaction are among the most common simplifications. 
The channel models derived based on these assumptions lose accuracy at high powers \cite{keykhosravi-tit-arxiv-2017}.
Nevertheless, since the physical channel is intractable, these models can be studied to develop transceivers that are more matched to the nonlinear nature of the optical channel than the MFS. The corresponding results can serve as a first step towards optimizing  optical receivers for the actual physical channel.

  In \cite{ghozlan2010interference,ghozlan2011interference}, the capacity of a  memoryless discrete-time two-user WDM channel, where both SPM and XPM are present,  has been studied at high powers. It has been proved for this channel that the capacity pre-log\footnote{The capacity pre-log is defined by $\lim\limits_{P\to\infty}\mathcal{C}(P)/\log P$, where $\mathcal{C}(P)$ is the channel capacity  under the input power constraint $P$.} pair $(1,1)$ is achievable.
  The discrete-time channel model used in \cite{ghozlan2010interference,ghozlan2011interference} relies on  the sampling receiver, { whose bandwidth is infinite. }
  This receiver has been used in many publications to obtain a tractable discrete-time model for the single-user NLS channel (see, for example \cite{hager2013design,yousefi_2011_tinf,keykhosravi-tit-arxiv-2017,beygi_2011_jlt}). However, the sampling receiver is suboptimal and  impractical, particularly for WDM systems  \cite[Sec.~I]{kramer2017autocorrelation}. 
  The discrete-time channel in \cite{ghozlan2010interference,ghozlan2011interference} can also be obtained from the underlying continuous-time channel by using  rectangular pulse shaping at the modulator, which, however, cannot be  implemented in practice.

 This paper studies the same continuous-time two-user WDM channel as in \cite{ghozlan2010interference,ghozlan2011interference}.
 Although our focus in this paper is on a two-user channel, our framework can be used to analyze  a channel of interest in a   WDM system with an arbitrary number of users by considering   all of the interfering signals as a single channel \cite{secondini_2013_jlt,secondini2012analytical}.
 We consider three  demodulation  schemes for the aforementioned continuous-time channel under the assumption that joint processing is not possible at the transmitters or at the receivers. 
 First, the MFS demodulator is studied, which is conventionally used in optical systems. 
 Second, a demodulator that provides sufficient statistics (SS) is developed.
  Third, a novel heuristic demodulation method, referred to as maximum matching (MxM), is presented.
   Furthermore, three different detection schemes,  used at the receivers to estimate the transmitted signal based on the demodulator output,  are considered: the conventional minimum Euclidean distance (MD) detector, the optimal  detector based on maximum a posteriori (MAP) probability, and   a two-stage (TS) detection method, which first estimates the amplitude and then the phase of each symbol. Different versions of TS detectors have been  considered  previously to mitigate the nonlinear phase noise in optical systems \cite{lau_2007_jlt,hager2013design,beygi2011optimization}. As we shall see, our TS detector is superior to the MD detector  at moderate powers.

   \begin{table}[!t]
   	\centering
   	\caption{A qualitative comparison between the complexity and performance of the receivers under study.}
   	\label{table1}
   	\begin{tabular}{c|c|c||c|c|c|}
   		\cline{2-6}
   		& \multicolumn{2}{|c||}{Complexity} &\multicolumn{3}{|c|}{Symbol error rate compared to the optimal receiver}\\
   		\cline{1-6}
   		\multicolumn{1}{|c||}{Receiver}& Demodulation & Detection & Low powers & Moderate powers & High powers  \\
   		\hline\hline
   		\multicolumn{1}{|c||}{MFS-MD} & Low&  Low& Close to optimal & Far from optimal& Far from optimal \\
   		\hline
   		\multicolumn{1}{|c||}{MFS-PR} & Low&  Low& Far from optimal & Close to optimal& Far from optimal \\
   		\hline
   		\multicolumn{1}{|c||}{MFS-MAP} & Low& High & Close to optimal &Close to optimal & Far from optimal \\
   		\hline
   		\multicolumn{1}{|c||}{SS-MAP} & High& High & Optimal &Optimal &Optimal ($\to 0$)$^*$ \\
   		\hline
   		\multicolumn{1}{|c||}{MxM-MD} & High&  Low& Close to optimal &Far from optimal & Close to optimal ($\to 0$) \\
   		\hline
   		\multicolumn{1}{|c||}{MxM-TS} &High & Low & Far from optimal & Far from optimal$^\dagger$  & Close to optimal ($\to 0$)\\
   		\hline
   		\multicolumn{6}{l}{* SER$\to 0$ as power grows large. \qquad $^\dagger$ The SER with  MxM-TS, is lower than with MxM-MD at moderate powers.}\\
   	\end{tabular}
   	
   \end{table}

By coupling different modulators and detectors, we investigate the performance (in terms of SER) and the complexity of {six} different receivers.
   First, we study the conventional MFS-MD receiver, which is optimal for the linear AWGN channel.
   { Second, we study  a receiver that performs MFS demodulation,  phase recovery (using the method in  \cite{pfau_2009_jlt}), and MD detection. This receiver, which relies on processing techniques used in today's optical systems, is referred to as MFS-PR.}
   Third, to find the performance limits of the MFS demodulator, we couple it with the optimal (MAP) detector.
   Fourth, we consider  the SS-MAP receiver, which is the optimal receiver for the channel under study.
   Fifth, we couple  MxM with MD to obtain a receiver that has a lower complexity than SS-MAP and can achieve arbitrary low SERs.
   Sixth, we study the MxM-TS receiver, which turns out to yield a slight  performance improvement over MxM-MD at moderate powers.
   %
   %
    A summary of  the considered receivers and a qualitative evaluation of their complexity  and performance is provided in Table~\ref{table1}. 
    At low powers, where nonlinearity is weak, all the receivers except the MxM-TS { and MFS-PR} have approximately the same SER as the optimal receiver, whereas in the moderate-power regime only MFS-MAP { and MFS-PR} perform close to optimal.
    It can be seen that unlike receivers based on MFS, the SER for the optimal receiver (SS-MAP) goes to zero as the power grows large.
   Also, arbitrarily low SERs can be achieved via simple detectors (MD and TS) coupled with the MxM demodulator. 
   { The results presented in Table~\ref{table1} are obtained for  truncated Gaussian pulse shaping and 16-QAM modulation. We expect similar results to hold for  practically relevant pulse shapes whose spectrum broadens with increasing power (see \cite[Sec.~VIII]{yousefi_2011_tinf}). For  rectangular pulse shaping, for which the signal spectrum does not broaden,  MFS provides  sufficient statistics and the SER of MFS-MAP goes to zero as power grows large. Modulation formats that are resilient to phase noise, such as pulse-amplitude modulation, may also result in a different SNR behavior compared to Table~\ref{table1}. 
      } 

{
We also evaluate the SER performance of the proposed receivers (by means of split-step Fourier simulations) for two single-span fiber-optical systems with different dispersion parameters.   Our results show that, for all receivers, the SER increases with power after a certain optimal power.  When the dispersion is small, the performance of SS-MAP and MFS-MAP turns out  to be superior to that of MFS-PR. When dispersion is high,   all  receivers except for SS-MAP are  inferior to MFS-PR. }       This paper  completes the analysis initiated with the conference paper \cite{Kamran17_ISIT}, where the MxM-MD and the MFS-MD receivers were investigated for the channel under study.

\textit{Organization:} The rest of this paper is organized as follows. In Section~\ref{s2}, a model for a continuous-time two-user WDM channel is obtained from a pair of coupled NLS equations under some simplifications. In Section~\ref{s3}, we present the  demodulation and detection methods.  
{ Section~\ref{siv} presents some analytical asymptotic bounds on the SER. } Numerical  results  are provided in Section~\ref{s4}. 
{ Specifically, in Section~\ref{s41}, we study the simplified channel model  and in Section~\ref{s5.5}, the performance under more realistic dispersive conditions is evaluated by simulation.}
Finally, Section~\ref{s5} concludes the paper.

\textit{Notation:}  Bold-face letters are used to denote random quantities. 
Sets are indicated by upper-case script letters, e.g., $\mathcal{X}$. The cardinality of a set $\mathcal{X}$ is indicated by $|\mathcal{X}|$. 
Vectors are denoted by lower-case underlined letters.
 $\mathcal{CN}(\mu, \sigma^2)$ denotes the proper complex  Gaussian distribution with mean $\mu$ and variance $\sigma^2$. The inner product between two complex functions $f(t)$ and $g(t)$ is defined as $\langle f,g\rangle=\int_{-\infty}^{\infty}f(t)g^*(t)\di t$, where $(\cdot)^*$ denotes  complex conjugation.
 $\red(x)$ and $\ima(x)$ denote the real and the imaginary part of a complex number $x$, respectively.
With $|\cdot|$ and $(\cdot)^T$ we denote the determinant and the transpose operators, respectively.
 We use $\mathrm{Pr}(\mathbf x=x)$  to denote the probability mass  function of a discrete random variable $\mf x$ at  $x$. Also, the probability density function of a continuous random variable $\mf x$ at  $x$ is denoted by $f_{\mathbf x}(x)$. 
The real line and the complex plane are represented by $\mathbb R$ and $\mathbb C$, respectively. { Finally, for two functions $q(x)$ and $r(x)$, we write $q(x)=\bo\rlpo{r(x)}$ if $\limsup\limits_{x\to 0}|q(x)/r(x)|<\infty$.}

\section{Channel Model}\label{s2}
The signal propagation through the fiber-optical  channel suffers from several impairments such as chromatic dispersion, fiber loss, and  Kerr nonlinearity.
The chromatic dispersion is mainly caused by the dependency of the refractive index on the frequency. 
Therefore, in the presence of chromatic dispersion, the  different frequency components of a transmitted pulse propagate with different speeds, causing the pulse to  broaden in time. This impairment can be compensated for by using dispersion-compensating fibers or through DSPs.

To compensate for the fiber loss, two types of optical amplification are typically deployed, namely, distributed or lumped amplification.
While the former amplifies the signal continuously during propagation, the latter does so only at the end of each amplification span. Optical amplification is always  accompanied by  additive noise caused by spontaneously emitted light photons.  In this paper, we shall focus on  lumped-amplified  systems.

The main impairment that limits the achievable data rates in fiber communications is the Kerr nonlinearity. 
It arises because the glass refractive index depends on the propagating optical power. 
It can be described by a phase shift proportional to the optical power applied to the complex baseband signal.
This phase shift is caused by the signal itself (SPM) or by other copropagating signals at different wavelengths (XPM).

In this paper, { we consider two channel models: a simple memoryless model for algorithm design and analysis, and a more realistic split-step Fourier model for performance evaluation. For the first purpose,} we consider the propagation of two optical signals with different carrier wavelengths through a point-to-point single-mode fiber, focusing on the effects of { SPM and } XPM. 
We  assume that the two signals have nonoverlapping spectra. 
The signal propagation  can then be described by the  pair of coupled NLS equations 
\cite[Eqs.~(7.4.1)--(7.4.2)]{agrawal_2007_nfo}
\begin{align}
	\frac{\partial\Aad{1}}{\partial z}+\frac{j\beta_{21}}{2}\frac{\partial \Aad{1}}{\partial t^2}+\frac{\alpha}{2}\Aad{1}
	&=j\gamma_1\rlp{|\Aad{1}|^2+2|\Aad{2}|^2}\Aad{1}\label{ch:nlse:1}\\
		\frac{\partial\Aad{2}}{\partial z}+d\frac{\partial \Aad{2}}{\partial t}+\frac{j\beta_{22}}{2}\frac{\partial \Aad{2}}{\partial t^2}+\frac{\alpha}{2}\Aad{2}
	&=j\gamma_2\rlp{|\Aad{2}|^2+2|\Aad{1}|^2}\Aad{2}\label{ch:nlse:2}
	\end{align}
where $\Aad{k}=\Aa{k}{z}{t},\ k\in\{1,2\}$ is the complex envelope of the optical signal $k$ at position $z$ and time $t$. Time is measured according to a reference frame moving with $\Aa{1}{z}{t}$. The group-velocity mismatch between the two channels is given by $d$. The constants  $\beta_{2k}$ and $\gamma_k$ are the  dispersion and the nonlinearity coefficients, respectively.
The fiber loss, which is assumed to be the same in both channels, is quantified by the parameter $\alpha$. { Although our focus in this paper is on single-polarization transmission, our analytical framework can be adapted to  suit an extension of the  channel model \eqref{ch:nlse:1}--\eqref{ch:nlse:2}  to two polarizations (see \cite[Eqs.~(7.1.19)--(7.1.20)]{agrawal_2001_nfo}).}

We assume that the fiber loss is completely compensated for using lumped amplification and that each amplifier generates Gaussian noise. { Moreover,  we assume that the effects of dispersion,  group velocity mismatch, and signal--noise interaction are negligible. This assumption is valid for  single-span short-haul communication systems with (optical or digital) dispersion compensation.  
 Under this assumption, the} coupled NLS equations  \eqref{ch:nlse:1}--\eqref{ch:nlse:2} yield  the continuous-time channel 
\cite[Eq.~(7.4.5)]{agrawal_2007_nfo}
\begin{IEEEeqnarray}{rCl}
	\Aa{1}{L}{t}&=&\Aa{1}{0}{t}e^{j\eta_1\rlp{|\Aa{1}{0}{t}|^2+2|\Aa{2}{0}{t}|^2}}+\Nn{1}{t}\label{cont_1}\\
	\Aa{2}{L}{t}&=&\Aa{2}{0}{t}e^{j\eta_2\rlp{|\Aa{2}{0}{t}|^2+2|\Aa{1}{0}{t}|^2}}+\Nn{2}{t}\label{cont_2}.
\end{IEEEeqnarray}
Here, $L$ is the length of the fiber. The parameters $\eta_k$ quantify the nonlinearity and can be calculated as
\begin{equation}\label{eta}
\eta_k=\ns\gamma_k\leff
\end{equation}
where $\ns$ is the number of amplification spans and
\begin{equation}\label{leff}
\leff=\frac{1-e^{-\alpha L_{\mathrm{span}}}}{\alpha}
\end{equation}
is the effective length of the fiber in a single span with length $L_\mathrm{span}=L/n_{\mathrm{span}}$. Because of fiber loss, the signal
power and, consequently, the nonlinear distortion, diminishes along the fiber. Therefore, the effective length is less
than the actual span length $L_{\mathrm{span}}.$
 Finally, the amplification noise is captured by $\Nn{1}{t}$ and $\Nn{2}{t}$, which are two independent  complex white circularly-symmetric Gaussian   processes with {power} spectral density
\begin{equation}\label{eq:noise:var}
	N_0=\frac{1}{2}n_{\mathrm{span}} h\nu FG.
\end{equation}
Here, $h \nu$ is the optical photon energy, $F$ is the noise figure, and  $G$ is the amplifier gain, which we assume equal to the signal attenuation in one span $\exp(\alpha L_{\mathrm{span}})$. 

In this paper, we shall {first}  focus on the simplified continuous-time model \eqref{cont_1}--\eqref{cont_2} and study the SER performance of different demodulation and detection schemes. { A more realistic channel model is studied in Section~\ref{s5.5}}.
Throughout the paper, we assume that the  parameters of the fiber are known at both receivers.
Moreover,  we assume  that the messages sent over each channel are  independent, and that joint  processing is not allowed at the transmitters or  receivers.


\section{Modulation, Demodulation, and Detection }\label{s3}
 In this section, a modulation scheme together with the six receivers listed in Table~\ref{table1} are presented for the continuous-time channel  \eqref{cont_1}--\eqref{cont_2}.
 The transmitters are assumed to perform linear modulation.
 Specifically, let the pulse shape $g(t)$ be a real function that is zero outside  the interval $(0,T]$ and has  unit energy, i.e., $\int_{0}^{T}g^2(t)\di t=1$. 
 Furthermore, define $\Aa{k}{0}{t} = \sum_i \din{k}{i}g(t-iT)$ to be the signal sent by  transmitter $k$, where $\din{k}{i}\in\mathbb C$ is the $i$th transmitted symbol. 
{ 	 Since $g(t)$ is zero outside $(0,T]$, after demodulation the noise terms at different symbol times become independent. Based on this and  the fact that the channel model is memoryless, the channel can be studied by only considering the input--output relation in the first symbol interval.}  Hence, we can drop the index $i$.
 The channel \eqref{cont_1} can be expressed as
 \begin{IEEEeqnarray}{rCl}\label{ch}
 	\Aa{1}{L}{t}&=& \dii{1}g(t)\exp\rlpo{j\eta_1\rlp{|\dii{1}{}|^2+2|\dii{2}{}|^2}g^2(t)}+\Nn{1}{t}  \quad 0\leq t< T
 \end{IEEEeqnarray}
 where  we set $\dii{k}=\din{k}{1}$ for $k=1, 2$ to simplify notation. 
 In this section, we focus only on the first WDM channel \eqref{cont_1}. Because of the symmetry, all the results hold for the second channel \eqref{cont_2} as well.
 
 Next, we introduce some notation that will come to use in the rest of this section.
We assume that  the input random variable $\dii{1}$ takes values from a finite-cardinality set $\mathcal{X}=\{x_1, x_2,..., x_{|\mathcal{X}|} \}$ and has a probability distribution $\pi_i=\mathrm{Pr}(\dii{1}=x_i)$.
 Furthermore, we assume that $\dii{2}$ belongs to a finite-cardinality set, which may be different from $\mathcal{X}$.
Also,  we let $\si=|\dii{1}|^2+2|\dii{2}|^2$, which belongs to  a finite set $\mathcal{S}=\{s_1, s_2,..., s_{|\mathcal{S}|} \}$.  
Finally,  we denote  the  conditional probability distribution of $\si$ given $\dii{1}$ by $\tilde{\pi}_{ji}=\mathrm{Pr}(\mathbf{s}=s_j\mid|\dii{1}|=|x_i|)$.

Next, we study the receivers listed in Table~\ref{table1}. 
We begin by introducing the conventional MFS-MD {and  MFS-PR} receivers. 
Then, we study  MFS-MAP, which is used to determine the performance limits of  MFS demodulation. 
Next, we devise the optimal receiver, SS-MAP, which serves as a benchmark to assess the performance of the other receivers.
 Finally, two heuristic receivers, MxM-MD and MxM-TS are studied.
  These receivers  have lower complexity than SS-MAP and can obtain arbitrary low SERs for sufficiently high powers.

\subsection{ MFS demodulation with MD detection (MFS-MD)}

The MFS demodulator  maps the received signal $\Aa{1}{L}{t}$ to the complex number
\begin{align}\label{mfs_out}
\omfs&=\int_{0}^{T}\Aa{1}{L}{t}\cdot g(t)\di t\\
&=\langle \Aa{1}{L}{t}, g(t)\rangle.
\end{align}
After observing the demodulation outcome  $\omfs=\omfss$, 
the MD detector selects $x_m\in \mathcal{X}$  such that
\begin{align}
	m
	&=\arg\min_{i}|v-x_i|^2.\label{eq22d}
\end{align}

{
\subsection{MFS demodulation with phase recovery (MFS-PR) }
 In the MFS-PR receiver, the output of the MFS demodulator passes through a phase-recovery block and is then  fed to the MD detector.
Throughout, we shall focus on the phase-recovery technique proposed in \cite{pfau_2009_jlt}.\footnote{ The  test carrier phases considered in the simulation results are $\pi b/128$, $b\in\{-32, \dots , 31 \}$. }}

\subsection{ MFS demodulation with MAP detection (MFS-MAP)}

Given the MFS output $\omfs=v$ in \eqref{mfs_out}, the optimal MAP detector  determines the input symbol $x_m\in \mathcal{X}$, such that 
\begin{align}
	m
	&=\arg\max_{i}\mathrm{Pr}\rlpo{\dii{1}=x_i\mid\omfs=\omfss}\label{eq22}\\
	&=\arg\max_{i}\pi_i f_{\omfs\mid \dii{1}}\rlpo{\omfss\mid x_i}\\
	&=\arg\max_{i}\pi_i\sum_{j}\tilde\pi_{ji}f_{\omfs\mid \mathbf s, \dii{1}}\rlpo{\omfss\mid s_j, x_i}\label{eqq2}
\end{align}
where in  \eqref{eqq2} we used that $\mathrm{Pr}(\mathbf{s}=s_j\mid\dii{1}=x_i)=\mathrm{Pr}(\mathbf{s}=s_j\mid|\dii{1}|=|x_i|)=\tilde{\pi}_{ji}.$ The conditional probability
$f_{\omfs\mid \mathbf s, \dii{1}}\rlpo{\omfss\mid s_j, x_i}$
can be calculated by noting that, given $\mathbf{s}=s_j$ and  $\dii{1}=x_i$, we have that $\omfs\distas\mathcal{CN}\rlpo{\mu_{ji},N_0}$, where 
\begin{equation}
\mu_{ji}=  x_i\left\langle  g(t)\exp\rlpo{j\eta_1s_jg^2(t)} , g(t)\right\rangle.
\end{equation}
Therefore,
\begin{equation}\label{map22}
f_{\omfs\mid \mathbf s, \dii{1}}\rlpo{\omfss\mid s_j, x_i}=\frac{1}{\pi N_0}\exp\rlpo{-\frac{|\omfss-\mu_{ji}|^2}{N_0}}.
\end{equation}
\subsection{ Sufficient statistics  with MAP  detection (SS-MAP)}

Let $\phii{\si}{t}=\eta_1\,{\si}\, g^2(t).$ 
The real  and the imaginary  part of $	\Aa{1}{L}{t}$ are  
\begin{IEEEeqnarray}{rCl}
	\Aar{1}{L}{t}&=& \red\rlpo{\dii{1}}g(t)\cos\rlpo{\phii{\si}{t}}-\ima\rlpo{\dii{1}}g(t)\sin\rlpo{\phii{\si}{t}}+\Nnr{1}{t}\label{17}\\
	\Aai{1}{L}{t}&=& \red\rlpo{\dii{1}}g(t)\sin\rlpo{\phii{\si}{t}}+\ima\rlpo{\dii{1}}g(t)\cos\rlpo{\phii{\si}{t}}+\Nni{1}{t}.
\end{IEEEeqnarray}
Note that, if additive noise is neglected, the signals $\Aar{1}{L}{t}$ and $\Aai{1}{L}{t}$ can be written as  linear combinations of the signals $\hs{\ell}{t}=g(t)\sin\rlpo{\phii{s_\ell}{t}}$ and $\hst{\ell}{t}=g(t)\cos\rlpo{\phii{s_\ell}{t}}$, $\ell=1, \dots, |\mathcal{S}|$.
Therefore, by \cite[Corollary 26.4.2]{lapidoth2009foundation},
\begin{align}
	\mathbf u_\ell^{\mathrm{R}}&=\langle\Aar{1}{L}{t},\hs{\ell}{t}\rangle\label{ss1}\\
	\tilde{\mathbf u}_\ell^{\mathrm{R}}&=\langle\Aar{1}{L}{t},\hst{\ell}{t}\rangle\\
	\mathbf u_\ell^{\mathrm{I}}&=\langle\Aai{1}{L}{t},\hs{\ell}{t}\rangle\\
	\tilde{\mathbf u}_\ell^{\mathrm{I}}&=\langle\Aai{1}{L}{t},\hst{\ell}{t}\rangle\label{ss4}
\end{align}
are sufficient statistics for determining $\dii{1}$ based on $\Aa{1}{L}{t}$. 
Let  $\ul{\mathbf u}^{\mathrm{R}}=[\mathbf u_1^{\mathrm{R}}, \dots, \mathbf u_{|\mathcal{S}|}^{\mathrm{R}}]$, and similarly define the vectors $\tilde{\ul{\mathbf u}}^{\mathrm{R}}$, $\ul{\mathbf u}^{\mathrm{I}}$, and $\tilde{\ul{\mathbf u}}^{\mathrm{I}}$.
Moreover, let the vector $\ul {\mathbf u}$  with length $4|\mathcal{S}|$ be the concatenation of the aforementioned vectors, i.e.,
\begin{equation}\label{eq:u}
\ul {\mathbf u}=[\ul{\mathbf u}^{\mathrm{R}}, \tilde{\ul{\mathbf u}}^{\mathrm{R}}, \ul{\mathbf u}^{\mathrm{I}}, \tilde{\ul{\mathbf u}}^{\mathrm{I}}].
\end{equation} 
It follows from \cite[Prop.~25.15.2]{lapidoth2009foundation} that the vector $\vU$  is conditionally jointly Gaussian given  $\si=s_j$ and $\dii{1}=x_i$.
Let the conditional mean vector of $\vU$ given $\mathbf{s}=s_j$ and $\dii{1}=x_i$ be $\ul{\mu}_{ji}$ and the conditional covariance matrix be ${\Sigma}$ (as we shall see later, ${\Sigma}$ does not depend on $j$ or $i$). It follows from steps similar to \eqref{eq22}--\eqref{eqq2} that the MAP decoder, after observing $\vU=\vUs$, selects the transmitted symbol  $x_m$ such that 
\begin{align}
	m
	&=\arg\max_{i}\pi_i\sum_{j}\tilde\pi_{ji}f_{\vU\mid \mathbf s, \dii{1}}\rlpo{\ul{u}\mid s_j, x_i}\label{eqq}
\end{align}
where
\begin{align}\label{eq25}
	f_{\vU\mid \mathbf s, \dii{1}}\rlpo{\ul{u}\mid s_j, x_i} =\frac{\exp\rlpo{-\frac{1}{2}(\vUs-\ul{\mu}_{ji})\Sigma^{-1}(\vUs-\ul{\mu}_{ji})^{T}}}{(2\pi)^{2|\mathcal S|}\sqrt{|\Sigma|}}.
\end{align}

Next we calculate  $\ul{\mu}_{ji}$ and $\Sigma$. We write $\ul{\mu}_{ji}$ as a concatenation of four vectors: $\ul{\mu}_{ji}=[\ul{\mu}_{ji}^\mathrm{R},\, \ul{\tilde{\mu}}_{ji}^\mathrm{R},\, \ul{\mu}_{ji}^\mathrm{I},\, \ul{\tilde{\mu}}_{ji}^\mathrm{I}]$. It follows from \eqref{eq:u} that the $\ell$th element of $\ul{\mu}_{ji}^\mathrm{R}$ is
\begin{equation}\label{md}
{\mu}_{ji\ell}^\mathrm{R}=\expe\rlbo{\mathbf u_\ell^{\mathrm{R}}\mid \mathbf{s}=s_j , \dii{1}=x_i }.
\end{equation}
The vectors $\ul{\tilde{\mu}}_{ji}^\mathrm{R}$, $\ul{\mu}_{ji}^\mathrm{I}$, and  $\ul{\tilde{\mu}}_{ji}^\mathrm{I}$ can be calculated  as in \eqref{md}.
We have from \eqref{17} and \eqref{ss1} that
\begin{align}
	\expe\rlbo{\mathbf u_\ell^{\mathrm{R}}\mid \mathbf{s}=s_j , \dii{1}=x_i }&
	=\red\rlpo{\diin{i}}
	\left\langle g(t)\cos\rlpo{\phii{s_j}{t}} , \hs{\ell}{t}\right\rangle-\ima\rlpo{\diin{i}}\left\langle g(t)\sin\rlpo{\phii{s_j}{t}} , \hs{\ell}{t}\right\rangle.
\end{align}
Moreover,
\begin{align}
	&\left\langle g(t)\cos\rlpo{\phii{s_j}{t}} , \hs{\ell}{t}\right\rangle\\
	&=\int_{0}^{T}g^2(t)\sin\rlpo{\phii{s_\ell}{t}}\cos\rlpo{\phii{s_j}{t}}\di t\\
	&=\frac{1}{2}\int_{0}^{T}g^2(t)\sin\rlpo{\phii{s_\ell+s_j}{t}}\di t +\frac{1}{2}\int_{0}^{T}g^2(t)\sin\rlpo{\phii{s_\ell-s_j}{t}}\di t\\
	&=\FFIIs{s_\ell+s_j}+\FFIIs{s_\ell-s_j}
\end{align}
where we have set
\begin{equation}
\FFIIs{z}=\frac{1}{2}\int_{0}^{T}g^2(t)\sin\rlpo{\phii{z}{t}}\di t.
\end{equation}
Similarly,
\begin{align}
	\left\langle g(t)\sin\rlpo{\phii{s_j}{t}} , \hs{\ell}{t} \right\rangle&
	=\FFIIc{s_\ell-s_j}-\FFIIc{s_\ell+s_j}
\end{align}
where
\begin{equation}
\FFIIc{z}=\frac{1}{2}\int_{0}^{T}g^2(t)\cos\rlpo{\phii{z}{t}}\di t.
\end{equation}
Therefore,
\begin{align}
	{\mu}_{ji\ell}^\mathrm{R}&=\red\rlpo{\diin{i}}
	\FFIIs{s_\ell+s_j}+\red\rlpo{\diin{i}}\FFIIs{s_\ell-s_j}+\ima\rlpo{\diin{i}}\FFIIc{s_\ell+s_j}
	-\ima\rlpo{\diin{i}}\FFIIc{s_\ell-s_j}.\label{uji:begin}
\end{align}
With analogous calculations, we obtain
\begin{align}
	{\tilde \mu}_{ji\ell}^\mathrm{R}&=	\red\rlpo{\diin{i}}
	\FFIIc{s_\ell+s_j}+\red\rlpo{\diin{i}}\FFIIc{s_\ell-s_j}-\ima\rlpo{\diin{i}}\FFIIs{s_\ell+s_j}+\ima\rlpo{\diin{i}}\FFIIs{s_\ell-s_j}\\
	{ \mu}_{ji\ell}^\mathrm{I}&=-\red\rlpo{\diin{i}}\FFIIc{s_\ell+s_j}+\red\rlpo{\diin{i}}\FFIIc{s_\ell-s_j}+
	\ima\rlpo{\diin{i}}\FFIIs{s_\ell+s_j}+\ima\rlpo{\diin{i}}\FFIIs{s_\ell-s_j}\\
	{ \tilde \mu}_{ji\ell}^\mathrm{I}&=\red\rlpo{\diin{i}}\FFIIs{s_\ell+s_j}-\red\rlpo{\diin{i}}\FFIIs{s_\ell-s_j}+
	\ima\rlpo{\diin{i}}\FFIIc{s_\ell+s_j}+\ima\rlpo{\diin{i}}\FFIIc{s_\ell-s_j}.\label{uji:end}
\end{align}

Next, we calculate $\Sigma$, which is a $4|\mathcal{S}|\times 4|\mathcal{S}|$ matrix. Dividing $\Sigma$ into 16 submatrices of size $|\mathcal{S}|\times|\mathcal{S}|$ and using  \cite[Prop.~25.15.2]{lapidoth2009foundation}, we obtain 
\begin{equation}\label{sig1}
\Sigma=\begin{bmatrix}
\Sigma^{11}       & \Sigma^{12} &0 & 0 \\
\rlp{\Sigma^{12}}^T       & \Sigma^{22} &0 & 0 \\
0       & 0 &\Sigma^{11} & \Sigma^{12} \\
0      & 0 &\rlp{\Sigma^{12}}^T & \Sigma^{22} \\
\end{bmatrix}
\end{equation}
where the element $\Sigma_{k\ell}^{11}$ of the submatrix $\Sigma^{11}$ is
\begin{align}
	\Sigma_{k\ell}^{11}&=\frac{N_0}{2}\left\langle\hs{k}{t},\hs{\ell}{t}\right\rangle \\
	&=\frac{N_0}{2}\int_{0}^{T}g^2(t)\sin\rlpo{\phii{s_k}{t}}\sin\rlpo{\phii{s_\ell}{t}}\di t\\
	&=\frac{N_0}{4}\int_{0}^{T}g^2(t)\cos\rlpo{\phii{s_k-s_\ell}{t}}\di t -\frac{N_0}{4}\int_{0}^{T}g^2(t)\cos\rlpo{\phii{s_\ell+s_k}{t}}\di t \\
	&=\frac{N_0}{2}\rlb{\FFIIc{s_k-s_\ell}-\FFIIc{s_k+s_\ell}}
\end{align}
for $k=1, \dots, |\mathcal S|$ and $\ell=1, \dots, |\mathcal S|$.
Furthermore,
\begin{align}
	\Sigma_{k\ell}^{12}&=\frac{N_0}{2}\left\langle\hs{k}{t},\hst{\ell}{t}\right\rangle
	=\frac{N_0}{2}\rlb{\FFIIs{s_k+s_\ell}+\FFIIs{s_k-s_\ell}}\\
	\Sigma_{k\ell}^{22}&=\frac{N_0}{2}\left\langle\hst{k}{t},\hst{\ell}{t}\right\rangle
	=\frac{N_0}{2}\rlb{\FFIIc{s_k+s_\ell}+\FFIIc{s_k-s_\ell}}.\label{sig22}
\end{align}
{ Note that the real and the imaginary parts of $\Nn{1}{t}$ are independent processes. This explains why half of the elements in \eqref{sig1} are zero. }

\subsection{ MxM  demodulation with MD detection (MxM-MD)}

Next, we present a novel heuristic demodulation  scheme, which is composed of three steps. First, the phase distortion of the received signal is estimated. This phase distortion is compensated for in the second step. Third, a MFS is applied to obtain the output of the demodulator.
The first step is based on the following proposition, whose proof follows from \cite[Ch.~4, Eq.~(3)]{ahlfors1953complex}.
\begin{proposition}\label{p1}\normalfont
 Let $f(t)$ be a nonnegative continuous  function on the interval $[a , b]$. Then 
\begin{equation}
\max_{s\in \mathbb R} \left|\int_{a}^{b}f(t)e^{jsf(t)}\di t\right|=\int_{a}^{b}f(t)\di t
\end{equation}
and $s=0$ achieves the maximum.
\end{proposition}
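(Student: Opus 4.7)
The plan is to establish the two claims of the proposition separately: first the upper bound on the modulus of the integral, and then the fact that this upper bound is attained at $s=0$.

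For the upper bound, I would invoke the standard modulus-of-integral inequality from complex analysis (the reference in the statement, Ahlfors Ch.~4, Eq.~(3)), namely that for any complex-valued integrable $h(t)$ on $[a,b]$ one has $\left|\int_a^b h(t)\,dt\right| \le \int_a^b |h(t)|\,dt$. Applying this to $h(t)=f(t)e^{jsf(t)}$ and using two elementary facts — that $|e^{jsf(t)}|=1$ for every real $s$ and every real $f(t)$, and that $f(t)\ge 0$ so $|f(t)|=f(t)$ — immediately yields
\begin{equation}
\left|\int_{a}^{b}f(t)e^{jsf(t)}\,dt\right| \le \int_{a}^{b} f(t)\,dt \qquad \text{for all } s\in\mathbb{R}.
\end{equation}

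For the second part, I would simply substitute $s=0$: the integrand reduces to $f(t)\cdot 1 = f(t)$, which is a nonnegative real function, so
\begin{equation}
\left|\int_{a}^{b}f(t)e^{j\cdot 0\cdot f(t)}\,dt\right| = \int_{a}^{b} f(t)\,dt.
\end{equation}
Hence the upper bound is achieved at $s=0$, which concludes the proof.

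There is really no hard step here; the argument is essentially a one-line application of the triangle inequality for integrals combined with $|e^{j\theta}|=1$. The nonnegativity and continuity hypotheses on $f$ are used only to guarantee that $|f(t)|=f(t)$ and that the integrals involved are well defined. The proposition will be applied later in the MxM demodulator with $f(t)=g^2(t)$ restricted to $[0,T]$, where both conditions hold by construction.
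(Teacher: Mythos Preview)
Your proposal is correct and matches the paper's approach: the paper simply states that the proof follows from the modulus-of-integral inequality in Ahlfors (Ch.~4, Eq.~(3)), and you have spelled out exactly that argument in full. There is nothing to add.
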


Next, we use Proposition~\ref{p1} to devise the first step of the demodulation. Assume that  $\dii{1}=x_1$ and $\mathbf{s}=s$. 
To estimate $s$, the receiver  calculates
\begin{IEEEeqnarray}{rCl}\label{smax}
	s_{\text{max}} &=& \arg\!\max\limits_{s'\in \mathcal{S}}\left|\int\limits_{0}^{T} \Aa{1}{L}{t}\cdot g(t)e^{-j\eta_1 s' \, g^2(t)}\di t\right|\\
	&=&\arg\!\max\limits_{s'\in \mathcal{S}}\left| x_1\int\limits_{0}^{T} g^2(t)e^{j\eta_1 ( s-s') \, g^2(t)}\di t +\mf n\right|\label{eq45}
\end{IEEEeqnarray}
where $\mf n \distas \mathcal{CN}(0,N_0)$. If we ignore the noise in \eqref{eq45}, it follows from Proposition~1 that $s_{\text{max}}=s$.
Therefore,  $s_\text{max}$ calculated in \eqref{smax} provides an estimate of $s$ in the  presence of  noise.  Note that, similar to the SS decoder, the computation of  $s_\text{max}$ in \eqref{eq45} requires $4|\mathcal{S}|$ real-valued correlators.

In the next step, the phase distortion is compensated for by multiplying the received signal with $\exp\rlpo{-j\eta_1 s_\text{max} \, g^2(t)}$. Finally, the result is fed to the MFS demodulator. To summarize, the output of the MxM demodulator is
 	\begin{equation}
\omtd=\int\limits_{0}^{T} \Aa{1}{L}{t}\cdot g(t)e^{-j\eta_1 s_{\text{max}}\, g^2(t)}\di t. \label{eq:mxm}
 \end{equation}
We see from \eqref{ch} that if the demodulator successfully compensates for the phase distortion, i.e., if $s_\text{max}=s$, then the output of the MxM demodulator has a Gaussian distribution centered at $x_1$ with variance $N_0$.
However, if $s$ is not estimated correctly at the receiver, the output of the demodulator has a different mean. 
The MD detector determines  $x_m\in \mathcal{X}$, based on the MxM output $\omtd=w$, such that
\begin{align}
	m
	&=\arg\min_{i}|w-x_i|^2.\label{eq22dd}
\end{align}

\subsection{ MxM demodulation with TS detection (MxM-TS)}

 To map the output of the MxM demodulator $\omtd=w$ in \eqref{eq:mxm} to one of the constellation points, MxM-TS uses a  simple two-stage detector. 
The two-stage detector first estimates the amplitude of the transmitted signal and then determines its phase.
Specifically, let $\mathcal{R}=\left\{r_1, \dots, r_{|\mathcal{R}|}\right\}$ be the set of all  possible amplitudes of the transmitted symbol. The amplitude detector chooses  $\hat R=r_i, \  1\leq i\leq |\mathcal{R}|,$ if 
$m_{i-1}\leq|w|\leq m_i$, where $m_i$ is  the $i$th detection threshold.
 To compute the thresholds $m_i$, we  assume that   given $\dii{1}=x_1$, we have $\omtd\distas\mathcal{CN}(x_1,N_0)$. Therefore,
\begin{align}\label{de}
	f_{|\omtd|\, {\bm{\vert}}\,|\dii{1}|} \rlpo{m_i\mid r_i}&\approx \frac{2m_i}{N_0}\exp\rlpo{-\frac{{m_i^2+r_i^2}}{N_0}}I_0\rlpo{\frac{2m_ir_i}{N_0}}
\end{align}
where $I_0(\cdot)$ is the zeroth order modified Bessel function of the first kind. Since the approximated conditional distribution in \eqref{de} is  unimodal,    $m_i$ can be  obtained based on the MAP rule by solving 
\begin{align}
	\mathrm{Pr}\rlpo{|\dii{1}|=r_i}f_{|\omtd|\,\bm|\,|\dii{1}|} \rlpo{m_i\mid r_i}=\mathrm{Pr}\rlpo{|\dii{1}|=r_{i+1}}f_{|\omtd|\,|\,|\dii{1}|}\rlpo{m_{i}\mid r_{i+1}}\label{m_i}
\end{align}
 for $i=1, \dots, |\mathcal R|-1$, (with the convention that $m_0=0$ and $m_{|\mathcal{R}|}=\infty$).
After estimating the amplitude of the transmitted signal, the two-stage detector selects the constellation point with amplitude $\hat R$ that is closest to $w$.

\subsection{Complexity} The MFS demodulator calculates the correlation between a real function $g(t)$ and  the complex received signal, which can be implemented by two real-valued correlators (or, equivalently, filters). This number is $4 |\mathcal{S}|$ for the more sophisticated  SS and MxM demodulators. The MAP detector in \eqref{eqq}--\eqref{eq25}  involves calculating a quadratic form in a $4|\mathcal S|$--dimensional space, which makes it much more computationally demanding than the other detectors.  The MAP detector in \eqref{eqq2}--\eqref{map22} involves calculating $|\mathcal S|$ exponential functions. Hence, it is  more complex than the MD and TS detectors, which are only based on comparisons. { Moreover,   the MxM and the SS demodulators have  larger bandwidths than the MFS demodulator. The  bandwidth of the MxM demodulator is the maximum of the bandwidths of the signals $g(t)\exp(-j\eta s g^2(t))$ over all values of $s$;   the bandwidth of the SS demodulator  is the maximum of the bandwidths of the signals $\hs{\ell}{t}$ and $\hst{\ell}{t}$ over $\ell=1, \dots, |\mathcal{S}|$.
}

 {
 \section{Asymptotic  SER Analysis}\label{siv}
In this section, we provide analytical evaluations of the asymptotic SER of the proposed receivers.
 Let the input random variable be $\dii{1}=\sqrt{\mathcal P}\diip{1}$, where $\diip{1}$ takes values from a fixed alphabet set $\mathcal{X}'=\{x'_1, x'_2,..., x'_{|\mathcal{X}'|} \}$, with some arbitrary probability distribution.  Similarly, let $\dii{2}=\sqrt{\mathcal P}\diip{2}$.
 In order to make analytical calculations possible, we  assume triangular pulse shaping, i.e.,
 \begin{equation}\label{tps}
g(t)= c\rlpo{\frac{T}{2}-\left|\frac{T}{2}-t\right|},   \ \ \ \ \ \ c=\sqrt{\frac{12}{T^3}}. 
 \end{equation}
 The following theorem presents our asymptotic SER results.
  \begin{theorem}\label{thm1}\normalfont
Assuming triangular pulse shaping, 
\begin{enumerate}[i)]
	\item  the  SER of the MFS-MAP receiver goes to $1-\max_{i}(\pi_i)$ as $\mathcal P\to\infty$, where ${\pi_i=\mathrm{Pr}(\dii{1}=x_i)}$.  
	\item the SER of the MxM-MD and MxM-TS receivers goes to zero as  $\mathcal P\to\infty$.
\end{enumerate}
  \end{theorem}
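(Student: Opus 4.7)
My plan is to show that the conditional mean $\mu_{ji} = x_i \langle g(t) \exp(j\eta_1 s_j g^2(t)), g(t) \rangle$ appearing in the MAP likelihood \eqref{map22} vanishes as $\mathcal P \to \infty$ for every pair $(i,j)$. Writing $x_i = \sqrt{\mathcal P}\, x'_i$ and $s_j = \mathcal P\, s'_j$, the triangular pulse \eqref{tps} renders $g^2(t)$ piecewise quadratic and monotone on the two halves $(0, T/2]$ and $[T/2, T)$, so the change of variables $u = g^2(t)$ reduces the integral $\int_0^T g^2(t) \exp(j\eta_1 \mathcal P s'_j g^2(t))\, dt$ to (twice) a truncated Fresnel-type integral of the form $\int_0^{3/T} \sqrt{u}\, \exp(j\eta_1 \mathcal P s'_j u)\, du$. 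A standard integration-by-parts bound provides a polynomial decay in $\mathcal P$ strong enough to dominate the $\sqrt{\mathcal P}$ factor carried by $x_i$, yielding $|\mu_{ji}| \to 0$. Hence the mixture density $f_{\omfs \mid \dii{1}}(v \mid x_i) = \sum_j \tilde\pi_{ji} f_{\omfs \mid \mathbf s, \dii{1}}(v \mid s_j, x_i)$ converges pointwise to the common Gaussian $f_{\mathcal{CN}(0, N_0)}(v)$, independently of $i$. Consequently, the MAP decision region \eqref{eqq2} of $i^\star = \arg\max_i \pi_i$ absorbs the whole complex plane (outside a set of vanishing measure), while the regions of the other hypotheses shrink, and the SER converges to $\Pr(\dii{1} \neq x_{i^\star}) = 1 - \max_i \pi_i$.

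\textbf{Part (ii).} For the MxM family, substituting \eqref{ch} into \eqref{eq:mxm} gives $\omtd = x_1 \int_0^T g^2(t) \exp(j\eta_1 (s - s_{\max}) g^2(t))\, dt + \tilde{\mathbf n}$, where $\tilde{\mathbf n} \sim \mathcal{CN}(0, N_0)$ regardless of $s_{\max}$ because the phase rotation $\exp(-j\eta_1 s_{\max} g^2(t))$ is unit-modulus and preserves the noise statistics. I would first prove $\Pr(s_{\max} = s) \to 1$: Proposition~\ref{p1} yields that, in the absence of noise, the objective in \eqref{smax} is uniquely maximized at $s' = s$ with value $|x_1| = \sqrt{\mathcal P}\,|x'_1|$, while for each wrong candidate the underlying oscillatory integral $\int_0^T g^2(t) \exp(j\eta_1 \mathcal P(s'_j - s'_k) g^2(t))\, dt$ vanishes at the same rate as in part (i). The deterministic margin between the true and any wrong candidate therefore grows as $\sqrt{\mathcal P}$, so a union bound over the finite set $\mathcal S$ against the Gaussian tails of the demodulation noise yields an exponentially small misestimation probability. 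Conditional on $s_{\max} = s$, the surviving integral equals unity and $\omtd = x_1 + \tilde{\mathbf n}$ with $|x_1|$ of order $\sqrt{\mathcal P}$; the MD detector \eqref{eq22dd} then errs only when the noise exceeds half the minimum distance of the scaled constellation, which is again exponentially unlikely. The same concentration argument handles MxM-TS: the thresholds $m_i$ solving \eqref{m_i} scale as $\sqrt{\mathcal P}$ while the noise variance is fixed, so the amplitude stage succeeds with probability tending to one, and the subsequent phase decision benefits from the angular noise of $\omtd$ shrinking as $1/\sqrt{\mathcal P}$. A total-probability decomposition over the event $\{s_{\max} = s\}$ then delivers SER $\to 0$.

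\textbf{Main obstacle.} The technical crux of both parts is the quantitative decay of the oscillatory integral $\int_0^T g^2(t) e^{j\alpha g^2(t)}\, dt$ as $\alpha \to \infty$. The van der Corput / integration-by-parts argument is standard, but the non-smooth apex of the triangular pulse at $t = T/2$ requires one to treat the two monotone halves separately and to bound the boundary contributions at the peak carefully. A minor subtlety, which would only arise if the input constellation contained the origin (for the 16-QAM alphabet used throughout the paper, it does not), is the case $s'_j = 0$ or $x'_i = 0$; such points contribute only a fixed prior mass and can be absorbed into the corresponding terms without affecting the claimed limits. Given the decay estimates, the remaining ingredients---Bayes-region convergence for part (i) and Gaussian-tail union bounds for part (ii)---are routine.
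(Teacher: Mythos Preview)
Your proposal is correct and follows essentially the same route as the paper: both parts hinge on showing that $\int_0^T g^2(t)\, e^{j\alpha g^2(t)}\,dt$ decays like $1/\alpha$ for the triangular pulse, and then drawing the Bayes-region conclusion for part~(i) and the $s_{\max}=s$ conclusion for part~(ii) exactly as you describe. The paper obtains the decay by reducing to $\int_0^{T/2} t^2 e^{j\beta t^2}\,dt$ rather than via your substitution $u=g^2(t)$, but this is cosmetic, and your probabilistic steps (union bound over $\mathcal S$, total-probability split) are in fact more explicit than the paper's terse ``with probability one'' assertions.
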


\begin{proof}\normalfont
Substituting     \eqref{ch}   into \eqref{mfs_out}, we can write the output of the MFS demodulator as
\begin{align}\label{mfs_out2}
\omfs&=\int_{0}^{T}\dii{1}g^2(t)\exp\rlpo{j\eta_1\rlp{|\dii{1}{}|^2+2|\dii{2}{}|^2}g^2(t)}\di t + \mathbf n\\
&=2c^2\diip{1}\int_{0}^{T/2}\sqrt{\mathcal P}  t^2\exp\rlpo{j\eta_1c^2\mathcal P\rlp{|\diip{1}{}|^2+2|\diip{2}{}|^2}t^2}\di t + \mathbf n.\label{eq2}
\end{align}
where $\mathbf n\distas\mathcal{CN}\rlpo{0,N_0}$. Here, \eqref{eq2} follows from \eqref{tps} and the definitions of $\mf x'_1$ and $\mf x'_2$.  We first assume $\mf x'_1\neq0 $. It can be shown by standard algebraic calculations that the integral in \eqref{eq2} is $\bo\rlpo{1/\sqrt{\mathcal P}}$. Therefore, as $\mathcal P\to \infty$ the first term in \eqref{eq2} goes to zero.  Furthermore, this term is zero if  $\mf x'_1=0$. Since $\mathbf n$ is independent of the transmitted signal,  the MAP detector selects, in the limit $\mathcal P\to\infty$, the symbol with largest a priory probability regardless of received signal, resulting in a SER of $1-\max_{i}(\pi_i)$.

Next, we prove the second part of the theorem.  Focusing on \eqref{eq45}, one can show with  similar calculations as above that for every $s\neq s'$, the integral $x_1\int g^2(t)e^{j\eta_1 ( s-s') \, g^2(t)}\di t$ goes to zero as $\mathcal P\to \infty$. Moreover, for $s= s'$, the integral equals  $x_1$. Therefore, assuming $ x_1\neq0 $, we conclude  that, in the limit $\mathcal P\to \infty$,  we have  $s_{\text{max}}=s$ with probability one. Under the assumption that $s_{\text{max}}=s$, it follows  from \eqref{eq:mxm} that    $\omtd\distas\mathcal{CN}\rlpo{\sqrt{\mathcal{P}}x'_1,N_0}$, where $\omtd$ is the outcome of the MxM demodulator. Therefore, in the limit  $\mathcal P\to \infty$, both MD and TS detectors will correctly detect the symbol $x_1$  with probability one.  If $ x_1=0 $, then $\omtd$ does not depend on $\mathcal P$, and therefore in the limit $\mathcal P\to \infty$, the symbol $0$ will be correctly detected  by both MD and TS with probability one.   
\end{proof}

   	Note that the first result in Theorem~\ref{thm1} implies that the asymptotic SER of the MFS-MD and MFS-PR is lower-bounded by $1-\max_{i}(\pi_i)$; the second result implies that the SER of SS-MAP goes to zero as $\mathcal P\to\infty$. 

%
 
}

\section{Numerical Examples }\label{s4}

{ This section presents numerical SER evaluations   for three single-span channels. The simplified model \eqref{cont_1}--\eqref{cont_2} is studied in Section~\ref{s41} and two  NLS channels are analyzed via split-step Fourier simulations in  Section~\ref{s5.5}.
 }

{\subsection{Transmission Over the Simplified Channel \eqref{cont_1}--\eqref{cont_2}}\label{s41}
}
In this section, we evaluate the performance of the {six} receivers presented in the previous section, by conducting Monte Carlo simulations on the channel model \eqref{cont_1}--\eqref{cont_2}.
We consider the transmission of 16-ary quadrature amplitude modulation (QAM) data symbols from each of the two transmitters.
 The input power $\mathcal P=E_s/T$, where $E_s=\expe[|\dii{1}|^2]$, is assumed to be the same for both channels.
 For these choices we have that  $|\mathcal{S}|=7$ and
 \begin{equation}
\mathcal{S}=\left\{0.6E_s, \     1.4E_s,\    2.2E_s,\    3E_s,\    3.8E_s,\    4.6E_s,\    5.4E_s\right\}.
 \end{equation}
The simulation parameters can be found in Table~\ref{t2}.
The nonlinear coefficient can be calculated from \eqref{eta}--\eqref{leff} as {$\eta_1=\eta_2= 22.1 \ \mathrm{W}^{-1}$}. 
Also, using  \eqref{eq:noise:var}, one  obtains {${N_0=1.43\cdot 10^{-15} \ \mathrm{W}/\mathrm{Hz}}$}.  We use ${100}$ samples per symbol and set $g(t)$ to a truncated Gaussian pulse  with a full width at half maximum of $T/2$. 
 A  uniform input distribution is assumed for both   transmitted signals, i.e., $\pi_i=1/16$. Consequently, the conditional probabilities $\tilde \pi_{ji}$ can be calculated as in Table~\ref{T1}.

\begin{table}[!t]
	\renewcommand{\arraystretch}{1.3}
	\caption{Parameters used in the simulation.}
	\label{t2}
	\centering
	\begin{tabular}{c c c }
		\hline
		\hline
		Parameter&Symbol& Value\\
		\hline
		Span length & $L_{\text{span}}$&$150 \ \mathrm{km}$\\
		Attenuation &$\alpha$ & $0.25 \ \mathrm{dB/km}$\\
		Nonlinearity& $\gamma_1=\gamma_2$ & $1.27 \ \rlp{\mathrm{W km}}^{-1}$\\
		Symbol rate & $1/T$&$10 \ \mathrm{Gbaud}$\\
		Optical photon energy & $h\nu$&$1.28\cdot 10^{-19} \ \mathrm{J}$\\
		Amplifier noise figure & $F$&$6 \ \mathrm{dB}$\\
		 Number of spans  & $n_{\text{span}}$&$1$\\
		\hline
		\hline
	\end{tabular}
	\vspace{-0.3cm}
\end{table}

Fig.~\ref{Fig1} shows  scatter plots of the MFS  and MxM demodulator outputs for three levels of input power. 
Note that since the output of the SS demodulator \eqref{ss1}--\eqref{ss4} lies in a vector space with dimension $4|\mathcal S|=28$, it is not possible to draw its scattering pattern.
 At { $\mathcal P =-5$ }dBm, it can be seen from Fig.~\ref{Fig1}\subref{fig1a} that the output of the MFS demodulation follows approximately  a Gaussian distribution. 
 However, the  clouds are not centered at the constellation points. Rather, they are rotated by an amount  proportional to the amplitude square of the constellation points. This rotation is caused by the SPM distortion.
In Fig.~\ref{Fig1}\subref{fig1d}, the output of the MxM demodulator at { $\mathcal P =-5$} dBm is shown.  It can be seen that with this demodulator, the effect of SPM is mitigated. Indeed,  the clouds are now centered at the constellation points.  

\begin{table}[!t]
	\centering
	\caption{$\tilde{\pi}_{ji}=\mathrm{Pr}(\mathbf{s}=s_j\mid|\mathbf{x}|=|x_i|)$ for $16$-QAM transmission with uniform distribution. }
	\label{T1}
	\begin{tabular}{|l||*{8}{c|}}\hline
		\backslashbox{$|x_i|$}{$s_j$}
		&$0.6E_s$&$1.4E_s$&$2.2E_s$&$3E_s$&$3.8E_s$&$4.6E_s$&$5.4E_s$\\\hline\hline
		$\sqrt{0.2E_s}$ &$0.25$&$0$&$0.5$&$0$&$0.25$&$0$& $0$\\\hline
		$\sqrt{E_s}$ &$0$&$0.25$&$0$&$0.5$&$0$&$0.25$& $0$\\\hline
		$\sqrt{1.8E_s}$&$0$&$0$&$0.25$&$0$&$0.5$&$0$& $0.25$\\\hline
	\end{tabular}
\end{table}

	\begin{figure*}[!t]
	\centering
	\subfloat[]{\includegraphics[scale=.8]{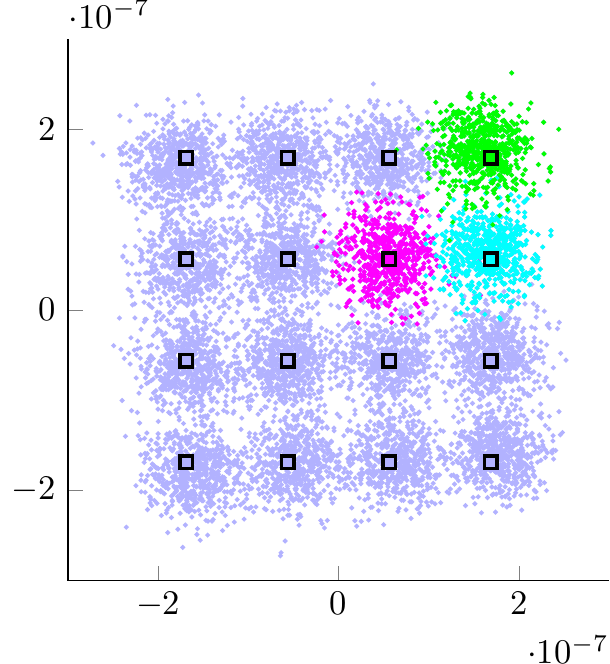}%
		\label{fig1a}}
	\hfil
	\subfloat[]{\includegraphics[scale=.8]{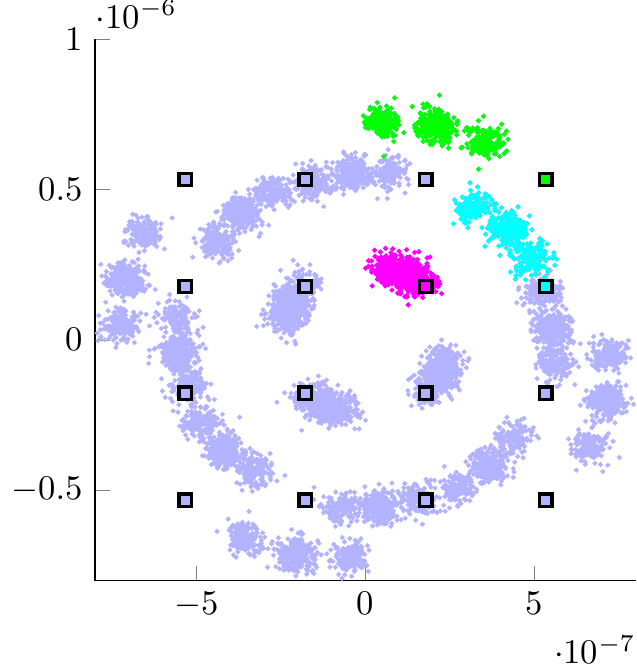}%
		\label{fig1b}}
	\subfloat[]{\includegraphics[scale=.8]{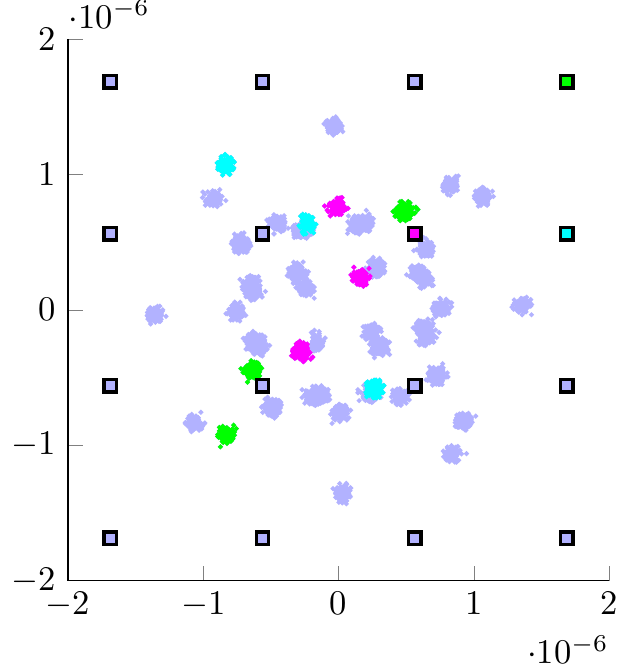}%
		\label{fig1c}}
	\\
	\subfloat[]{\includegraphics[scale=.8]{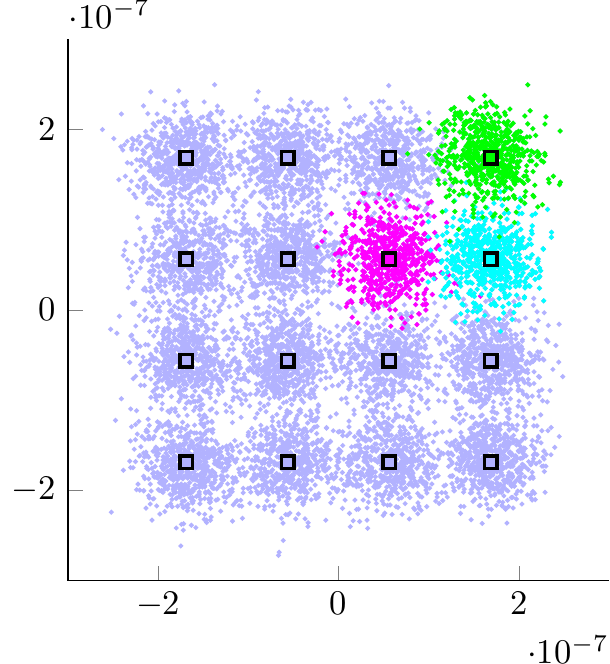}%
		\label{fig1d}}
	\subfloat[]{\includegraphics[scale=.8]{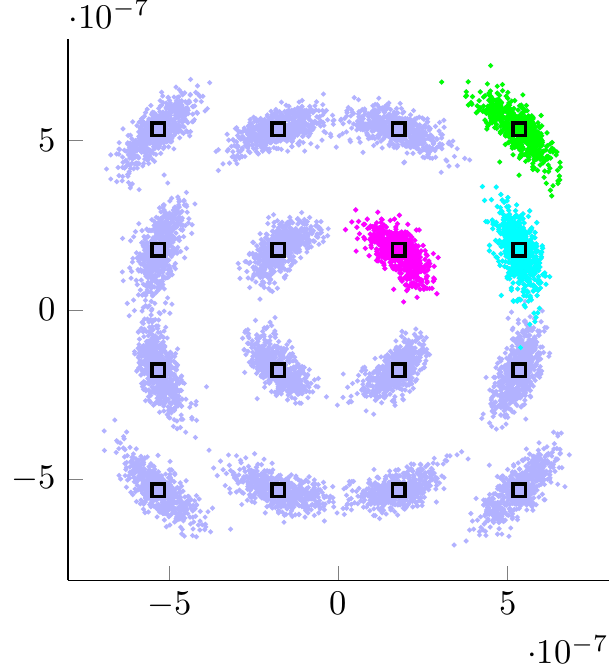}%
		\label{fig1e}}
	\subfloat[]{\includegraphics[scale=.8]{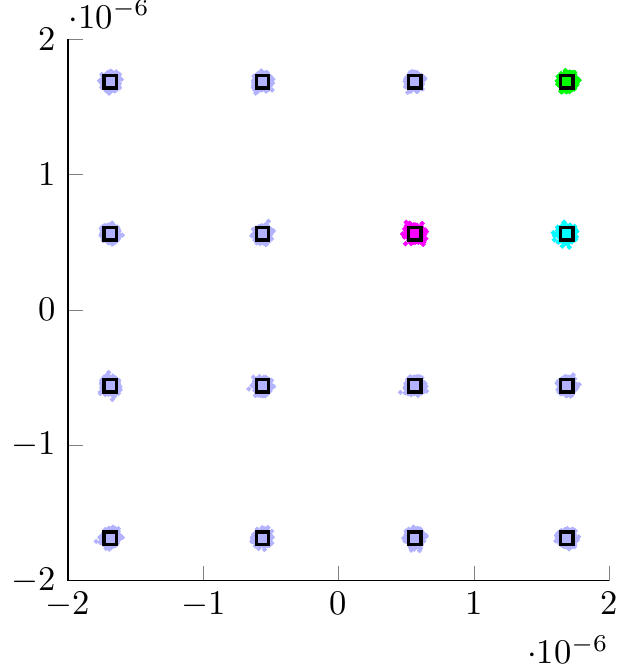}%
		\label{fig1f}}
	\caption{\footnotesize  Scatter plots of the output of two demodulation schemes, matched filtering and sampling (MFS) and  maximum matching (MxM), for $16$-QAM with three input powers {$\mathcal P =-5$} dBm: (a) and (d), { $\mathcal P =5$} dBm: (b) and (e), and  {$\mathcal P =15$} dBm: (c) and (f). Different colors are used to identify demodulator outputs corresponding to three given input symbols.
	}
	\label{Fig1}
	\vspace{-.3cm}
\end{figure*}

Fig.~\ref{Fig1}\subref{fig1b} illustrates the MFS demodulator's output at { $\mathcal P =5$} dBm. 
One can observe that the effect of the nonlinear distortion becomes more significant compared to the case { $\mathcal P =-5$} dBm.
Each constellation point is scattered to three different clouds, each one corresponding to the three  possible values of the XPM distortion (the three values of $|\dii{2}|$). Also, the centers of the clouds are further rotated away from the constellation points because of the SPM.  
As shown in Fig.~\ref{Fig1}\subref{fig1e}, the output of MxM is also dispersed to three clouds per  symbol. However, unlike MFS, these clouds are centered at the constellation points.

	One can observe from Fig.~\ref{Fig1}\subref{fig1c}  that when { $\mathcal P =15$} dBm both  the phase and the amplitude of the MFS output are distorted. The power loss, which is evident in   Fig.~\ref{Fig1}\subref{fig1c}, can be explained  as follows. At high powers, the phase of the integrand in \eqref{mfs_out} changes quickly during one time slot. 
	This rapid phase change scales down the integral's result in \eqref{mfs_out}, which is the output of the MFS demodulator.
	 Alternatively, the power loss can be explained in the frequency domain.  
	 At high powers, the nonlinear distortions substantially broaden the signal's spectrum. However, MFS uses a filter matched to the transmitted pulse shape, which has the same bandwidth as the transmitted signal. Therefore, the signal's out-of-band energy is excluded. 
	  It can be seen from Fig.~\ref{Fig1}\subref{fig1f} that the output of the MxM demodulator is centered at each constellation point, i.e., there is no power loss or phase distortion.  Fig.~\ref{Fig1}\subref{fig1f} indicates that the nonlinear distortion is effectively compensated for by the MxM demodulator.
	
	Fig.~\ref{err} depicts the SER for the { six} receivers  introduced in Section~\ref{s3}. Moreover, the SER  for the AWGN channel, obtained by setting $\eta_1=0$ in \eqref{ch}, is plotted for comparison. In the following, we discuss the results in Fig.~\ref{err} for each demodulation scheme. 
	
	\textit{MFS demodulator:} In our analysis, this demodulator is combined with two detectors, namely, MD and MAP. It is well known that for the AWGN channel and a uniform input distribution, these two detectors coincide. 
	On the contrary, it can be observed in Fig.~\ref{err} that for the nonlinear channel considered here, a substantial gap exists between the performance of these two detectors. 
	The SER for the MFS-MD receiver follows first the SER of the AWGN channel, reaches a minimum point of {$1.6\cdot 10^{-2}$}, and then increases to approximately   one at high power levels. The increase in the SER in the high-power regime can be explained by looking at Figs.~\ref{Fig1}\subref{fig1a}--\subref{fig1c}. The output of the MFS demodulator is not centered at the constellation points. Therefore, the MD decoder fails to provide a sound estimate of the transmitted symbols.  { Comparing MFS-PR with MFS-MD, it can be seen that a considerable  improvement  is obtained by performing phase recovery. The minimum SER for MFS-PR is $1.4\cdot 10^{-3}$. }

	By changing the detection scheme from MD to MAP, a substantial performance gain can be obtained. The MFS-MAP receiver yields  a SER of {$3.1\cdot 10^{-4}$ at $\mathcal P =2$ dBm, which is more than $50$} times smaller than the minimum SER that can be obtained with the MFS-MD. 
	The MAP detector can identify the transmitted symbols as long as the output of the MFS consists of well-separated clouds.  
	However, as shown in Figs.~\ref{Fig1}\subref{fig1a}--\subref{fig1c}, because of the nonlinearity, the clouds move in the constellation plane as the power level changes and can overlap. Therefore, based on the position of the clouds, increasing the input power can enhance or deteriorate the performance, which causes the somewhat irregular behavior  of the SER for the MFS-MAP receiver in Fig.~\ref{err}.
	\begin{figure*}[!t]
		\centering
		{\includegraphics{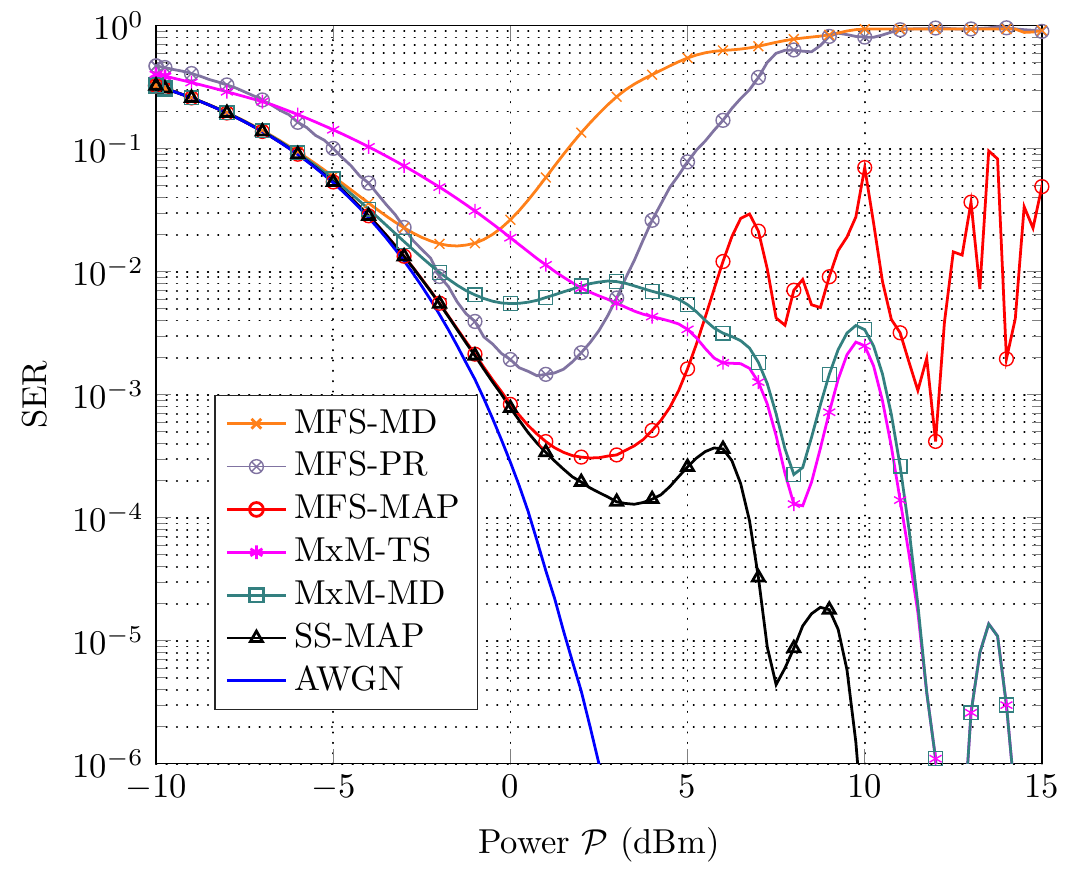}%
		}\hfil
		\caption{\small  The SER of  the {six} receivers introduced in Section~\ref{s3} { is illustrated by conducting Monte-Carlo simulations on the channel model \eqref{cont_1}--\eqref{cont_2}}. The SER of an AWGN channel with the same noise variance is also plotted for comparison.
		}\label{err}

	\end{figure*}
	
		\textit{MxM demodulator:} 
		Two detector schemes, namely, TS and MD, are combined with the MxM demodulator.
		It can be seen in Fig.~\ref{err} that at power levels lower than { $2$}  dBm,   MD  outperforms  TS; when {$2$} dBm $\leq \mathcal P\leq {11}$ dBm,  TS  yields a smaller SER than  MD; and at power levels larger than $11$ dBm, both detectors  perform  equally. The reason is as follows. In the low-power regime, the nonlinearity is weak and the output of the MxM demodulator has approximately a Gaussian distribution centered at the transmitted signal (see Fig.~\ref{Fig1}\subref{fig1d}). Therefore, MD detection is close to optimal at low powers.  
		In the moderate-power regime, the output of the MxM demodulator experiences a phase distortion caused by SPM and XPM  (see Fig.~\ref{Fig1}\subref{fig1e}). 
		In the presence of  phase distortion, TS  outperforms  MD, as previously reported in the literature (see \cite{hager2013design}, for example).
		 Next, we explain why the MxM-MD and the MxM-TS receivers yield the same SER at high powers.
		 The MxM demodulator first tries to cancel the nonlinear distortion. 
		 If it succeeds, the output of the demodulator follows a Gaussian distribution centered at the transmitted symbol.  
		 Otherwise, the outcome of  MxM  gets distorted by the nonlinearity.
		 In the first case, both the MD and TS detectors are able to detect the transmitted symbol almost without error.
		 In the second case, both detectors  make most likely an error because the phase and the amplitude of the demodulator output are severely distorted at high powers.
		 This also causes the nonmonotonic behaviour of the SER as a function of the power. 

			\textit{SS-MAP receiver:} 
			SS-MAP is the optimal receiver for the channel under study (although it has a high complexity) and its SER can serve as a benchmark to compare the performance of other low-complexity receivers.  One can see that the SER of the MFS-MAP follows that of the optimal receiver closely up until {$\mathcal P =1$ }dBm. However, unlike the MFS-MAP, the SER of the SS-MAP and of both the MxM receivers vanishes at high power levels. We see from Fig.~\ref{err} that the effect of the nonlinearity cannot be completely mitigated even by using optimal demodulation and detection schemes, as there exists a considerable gap between the SER of the SS-MAP receiver and SER achievable over an AWGN channel. 
			{
			\subsection{Transmission Over Two Single-Span NLS Channels}\label{s5.5}
			In this section, we evaluate the performance of the  receivers introduced in Section~\ref{s3} for two realistic single-span fiber-optical  systems, one with a low-dispersion SMF and the other with a standard SMF. We use the MFS-PR receiver as a benchmark. We note that the SS demodulator and the MAP detector  no longer represent the optimal demodulation and detection schemes, as they have been designed for the simplified channel model \eqref{cont_1}--\eqref{cont_2} and are mismatched to the channel under study in this section. 
			
			The signals $\Aa{1}{0}{t}$ and $\Aa{2}{0}{t}$ are passed through a brick-wall filter with bandwidth $\df/2$, where $\df$ is the channel spacing parameter in hertz. The baseband input signal, $\Ab{0}{t}$, is generated according to 
			\begin{equation}
				\Ab{0}{t}=\Aa{1}{0}{t}e^{-j\pi t\df}+\Aa{2}{0}{t}e^{j\pi t\df}
			\end{equation}
			  The input signal $\Ab{0}{t}$ is transmitted through the fiber-optical channel governed by the NLS equation 
			\begin{align}
				\frac{\partial\Abd}{\partial z}+\frac{j\beta_{2}}{2}\frac{\partial \Abd}{\partial t^2}+\frac{\alpha}{2}\Abd
				=j\gamma|\Abd|^2\label{ch:nlse}
			\end{align}
			where $\beta_{2}$, $\gamma$, and $\alpha$ are  dispersion, nonlinearity, and attenuation coefficients, respectively. The fiber loss is compensated completely by an  optical amplifier. The dispersion is compensated at each receiver digitally.
			
		\begin{figure*}[!t]
	\centering
	
	{\includegraphics{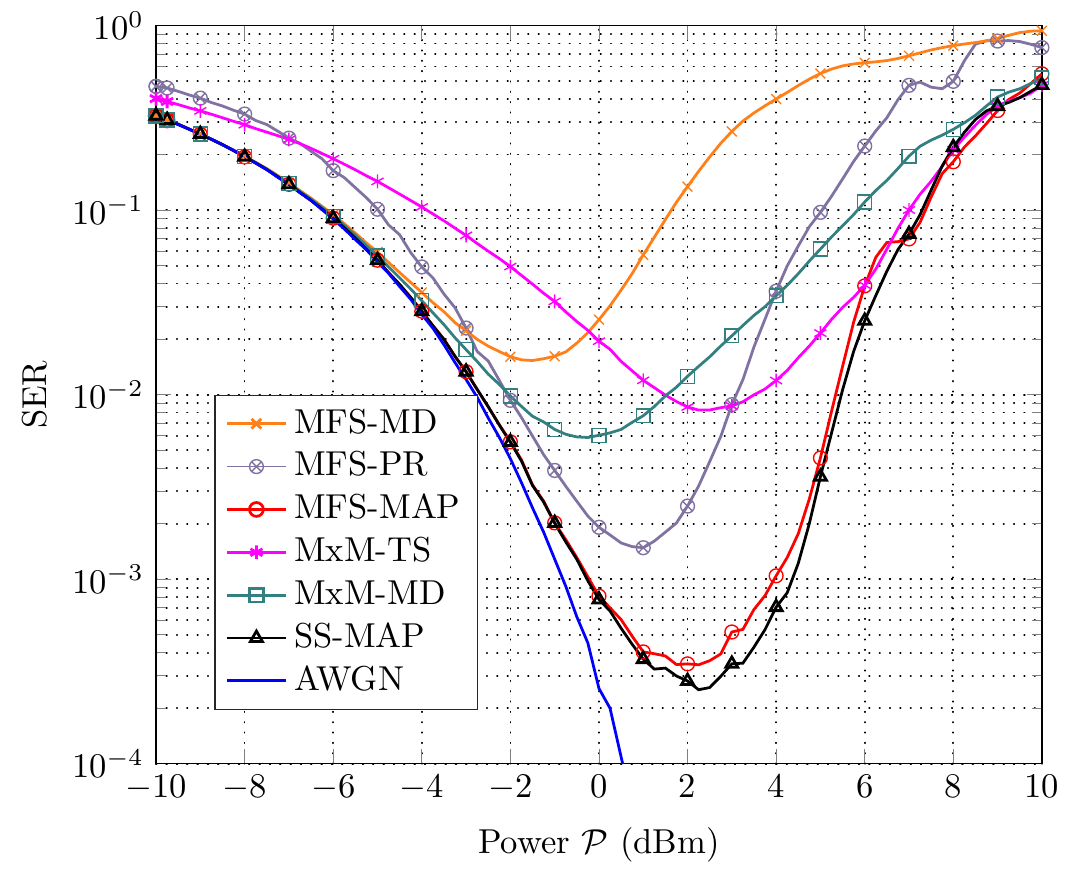}%
	}\hfil
	
	\caption{\small { The SER of    the six mismatched receivers introduced in Section~\ref{s3} is plotted for a single-span transmission with  low-dispersive fiber. The SER of an AWGN channel with the same noise variance   is also plotted for comparison.}
	}\label{errld}

\end{figure*}
			
			 We consider two fiber-optical systems with different dispersion parameters. The first  system deploys a quadruply clad fiber \cite[Ch.~1]{agrawal_2007_nfo} with $\beta_{2}=-1.27$ ps$^2$/km and the second system uses a standard SMF with $\beta_{2}=-21.7$ ps$^2$/km.  The channel spacing parameter is $\df=40 $ GHz. The values of the other parameters can be found in Table~\ref{t2}. The solution  of \eqref{ch:nlse} is approximated by  the split-step Fourier method \cite[Ch.~2.4.1]{agrawal_2007_nfo}; $100$ samples are taken from each symbol to discretize the input signal. Pulse shaping is the same as in Section~\ref{s4}. 

				\begin{figure*}[!t]
				\centering
				{\includegraphics{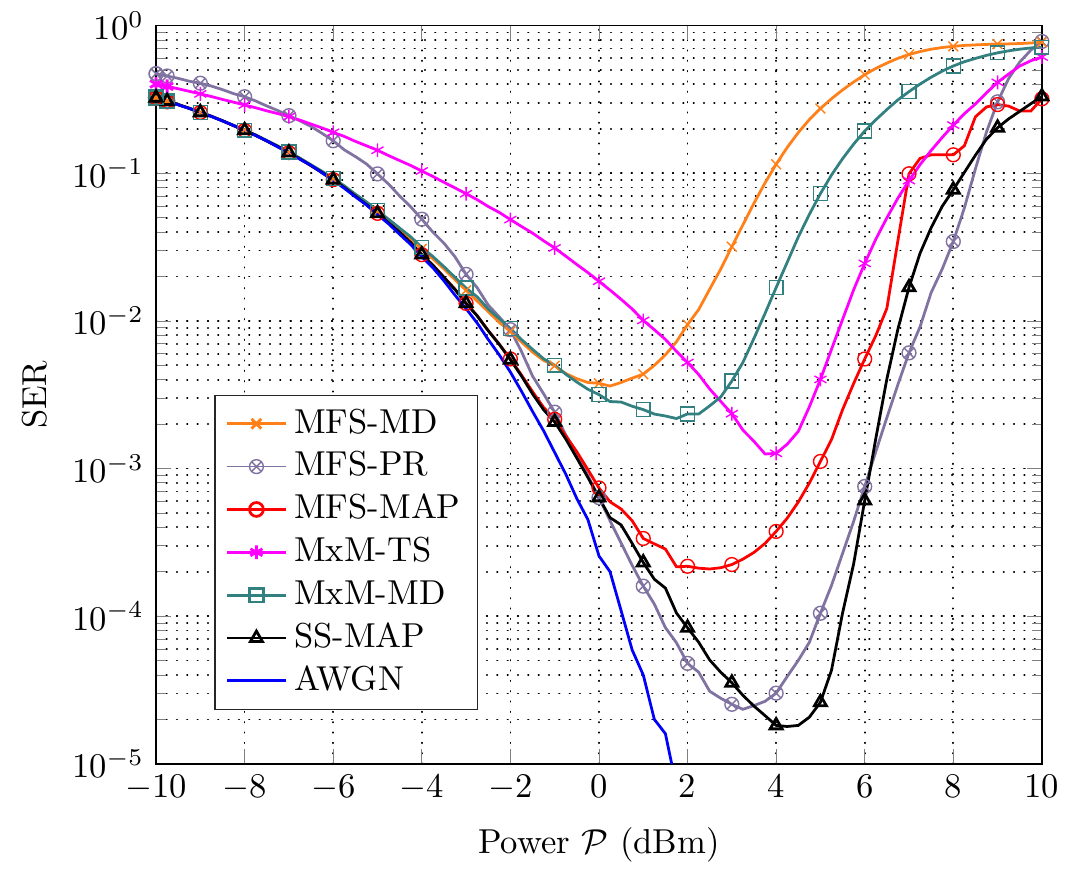}%
				}\hfil
				\caption{\small { The SER of the  six mismatched receivers introduced in Section~\ref{s3} is plotted for a single-span transmission with  standard SMF. The SER of an AWGN channel with the same noise variance   is also plotted for comparison.}
				}\label{errhd}	
			\end{figure*}

			Fig.~\ref{errld} illustrates the  performance of the six receivers for a low-dispersive fiber. It can be seen that because of the nonlinearity--dispersion interplay, the SER of all the receivers increases after reaching a global minimum.
			By using the SS-MAP and the MFS-MAP receivers, considerable performance gains can be achieved compared to MFS-PR. One can see that the MxM-MD and MxM-TS perform worse than MFS-PR but better than the MFS-MD receiver.  
			 Fig.~\ref{errhd} presents the SER  for a standard SMF. The dispersion is high and the  MFS-PS and SS-MAP perform better than the other receivers.  }

\section{ Conclusion and Discussion}\label{s5}

{Six} receivers were studied for a two-user simplified WDM channel and a novel demodulator, referred to as MxM, was proposed.
Our results indicate that the MFS-MD receiver, which is optimal for the AWGN channel, performs very poorly in the presence of optical nonlinear distortion. 
However, when the output of the  MFS is fed to a MAP detector, one can achieve performance close to the optimal receiver at low powers.
 In the high-power regime, the SER goes to zero with power for the optimal receiver as well as for the receivers based on the MxM demodulator. On the contrary, for receivers based on the MFS demodulator, the SER does not vanish.


In coherent optical transmissions the  signal spectrum broadens at high transmit power levels, because of the nonlinearity. 
The information embedded in the out-of-band frequencies is however ignored by the MFS demodulator.
 Our results  indicate that ignoring this information loss  deteriorates  performance substantially at high powers. 
  Moreover, by proposing the MxM demodulator, we showed that a vanishing SER can be obtained by a heuristic receiver that is simpler than the optimal one.
  
 {
 	When evaluated over a more realistic single-span fiber-optical channel, modeled by the NLS equation, the performance of all  receivers declines in the high-power regime. In the low-dispersion case two of the receivers analyzed in this paper, namely MFS-MAP and SS-MAP outperform the conventional MFS-PR receiver. Since the receivers in this paper were designed based on a simplified memoryless model, further improvement is expected by devising receivers that take into account both dispersion and nonlinearity. It seems that developing the optimal receiver in the presence of dispersion is a formidable task and heuristic methods should be considered.  A straightforward approach may be  optimizing the performance of the proposed receivers over different values of $\eta_1$. Since dispersion mitigates the effects of nonlinearity, the optimal $\eta_1$ may be smaller than the right-hand side of \eqref{eta}.
 	
 	Finally, we  note that equalization and phase recovery are  essential parts of today's optical receivers.  While the performance of the introduced receivers may be influenced by  these two steps,  we have not investigated the proper coupling of the equalization and the phase-recovery processes with the demodulation and detection steps. This is an interesting topic for   future studies. 
 	
 }


\end{document}